\newtheorem{theorem}{Theorem}
\newtheorem{lemma}{Lemma}
\newtheorem{corollary}{Corollary}
\newtheorem{observation}{Observation}
\newtheorem{definition}{Definition}
\newenvironment{proof}{\noindent {\bf Proof.\,}}{\qed}
\def\NP{\mathsf{NP}}
\def\Po{\mathsf{P}}
\begin{document}

\begin{frontmatter}

\title{Note\\[1ex]
Map graphs having witnesses of large girth
}

\author{Hoang-Oanh Le}
\ead{LeHoangOanh@web.de}
\address{Berlin, Germany}

\author{Van Bang Le}
\ead{van-bang.le@uni-rostock.de}
\address{Universit\"at Rostock, Institut f\"ur Informatik, Rostock, Germany}

\begin{abstract}
A half-square of a bipartite graph $B=(X,Y,E_B)$ has one color class of $B$ as vertex set, say $X$; two vertices are adjacent whenever they have a common neighbor in $Y$. 
If $G=(V,E_G)$ is the half-square of a {\em planar\/} bipartite graph $B=(V,W,E_B)$, then $G$ is called a map graph, and $B$ is a witness of $G$.
Map graphs generalize planar graphs, and have been introduced and investigated by Chen, Grigni and Papadimitriou [STOC 1998, J. ACM 2002]. They proved that recognizing map graphs is in $\NP$ by proving the existence of a witness. Soon later, Thorup [FOCS 1998] claimed that recognizing map graphs is in $\Po$, by providing an  $\Omega(n^{120})$-time algorithm for $n$-vertex input graphs.  

In this note, we give good characterizations and efficient recognition for half-squares of bipartite graphs with girth at least a given integer $g\ge 8$. It turns out that map graphs having witnesses of girth at least $g$ are precisely the graphs whose vertex-clique incidence bipartite graph is planar and of girth at least $g$. Our structural characterization implies an $O(n^2m)$-time algorithm for recognizing if a given $n$-vertex $m$-edge graph $G$ is such a map graph.

\end{abstract}

\begin{keyword}
Map graph; half-square; graph class; graph algorithm
\end{keyword}

\end{frontmatter}

\section{Introduction and preliminaries}

A {\em map graph\/}, introduced and investigated by Chen et al.~\cite{ChenGP98,ChenGP02}, is the intersection graph of simply-connected and interior-disjoint regions of the Euclidean plane. More precisely, a {\em map\/} of a graph $G=(V,E_G)$ is a function $\cal M$ taking each vertex $v\in V$ to a closed disc homeomorph ${\cal M}(v)$ (the regions) in the plane, such that all ${\cal M}(v)$, $v\in V$, are interior-disjoint, and two distinct vertices $v$ and $v'$ of $G$ are adjacent if and only if the boundaries of ${\cal M}(v)$ and ${\cal M}(v')$ intersect. A map graph is one having a map. 
Map graphs are interesting as they generalize planar graphs in a very natural way. Some applications of map graphs have been addressed in~\cite{ChenHK99}. Papers dealing with hard problems in map graphs include \cite{Chen,DemaineFHT,DemaineH,EickmeyerK17,FominLS}. 
Certain map graphs are related to 1-planar graphs~\cite{Brandenburg15,Brandenburg18,ChenGP06}, a relevant topic in graph drawing.   

In~\cite{ChenGP98,ChenGP02}, the notion of {\em half-squares\/} of bipartite graphs also has been introduced in order to give a graph-theoretical characterization of map graphs. The square of a graph $H$, denoted $H^2$, is obtained from $H$ by adding new edges between two distinct vertices whenever their distance in $H$ is two. Then, $H$ is called a square root of $G=H^2$. 
Given a bipartite graph $B=(X,Y,E_B)$, the subgraphs of the square $B^2$ induced by the color classes $X$ and $Y$, $B^2[X]$ and $B^2[Y]$, are called the two {\em half-squares\/} of $B$. It turns out that map graphs are exactly half-squares of {\em planar\/} bipartite graphs~\cite{ChenGP98,ChenGP02}. If $G=(V, E_G)$ is a map graph and $B=(V,W,E_B)$ is a planar bipartite graph such that $G=B^2[V]$, then $B$ is a {\em witness\/} of $G$. It is shown in~\cite{ChenGP98,ChenGP02} that an $n$-vertex graph $G=(V,E_G)$ is a map graph if and only if it has a witness $B=(V,W,E_B)$ with $|W|\le 3n-6$, implying that recognizing map graphs is in $\NP$. Soon later, Thorup~\cite{Thorup98} claimed that recognizing map graphs is in $\Po$, by providing a polynomial-time algorithm. (Thorup did not give the running time explicitly, but it is estimated to be roughly $\Omega(n^{120})$ with $n$ being the vertex number of the input graph; cf.~\cite{ChenGP02}.) Thorup's algorithm is very complex and highly non-combinatorial. Given the very high polynomial degree in Thorup's running time, the most discussed problem concerning map graphs is whether there is a faster recognition algorithm for map graphs. One direction in attacking this problem is to consider map graphs with restricted witness. Recently, in \cite{MnichRS16}, it is shown that map graphs with outerplanar witness and map graphs with tree witness can be recognized in linear time. (We remark that Thorup's algorithm cannot be used to recognize map graphs having witnesses with additional properties.)

In this note we consider map graphs with girth constrained witness. Where, the {\em girth\/} of a graph is the length of a shortest cycle in the graph; the girth of a tree is $\infty$. Our motivation is the observation that every planar graph is a map graph with a witness of girth at least six. Indeed, if $G=(V,E_G)$ is a planar graph, then the subdivision of $G$, i.e., the vertex-edge incidence bipartite graph $B$ obtained from $G$ by replacing each edge by a path of length two, has girth at least six, and clearly $G=B^2[V]$. Note, however, that a clique of arbitrary size is a map graph with a star witness. So, in this context, the vertex-clique incidence bipartite graph is more useful than the subdivision.  

\begin{definition}
\label{def:incidence}
Let $G=(V,E_G)$ be an arbitrary graph.
\begin{itemize}
\item The bipartite graph $S_G=(V,E_G, E_B)$ with $E_B=\{ve\mid v\in V, e\in E_G, v\in e\}$ is the \emph{subdivision}, also the \emph{vertex-edge incidence bipartite graph\/}, of~$G$.
\item Let ${\cal C}(G)$ denote the set of all maximal cliques of $G$. The bipartite graph $B_G=(V,{\cal C}(G),E_B)$ with $E_B=\{vQ\mid v\in V, Q\in {\cal C}(G), v\in Q\}$ is the \emph{vertex-clique incidence bipartite graph} of $G$.
\end{itemize}
\end{definition}

Note that subdivisions and vertex-clique incidence graphs of triangle-free graphs coincide. It is quite easy to see that, for every graph $G$, $G=S_G^2[V]$ and $G=B_G^2[V]$. Thus, in the context of map graphs, it is natural to ask for a given graph $G$ whether $S_G$, respectively, $B_G$, is planar. While it is well known that $S_G$ is planar if and only if $G$ is planar, the situation for $B_G$ is not clear yet. 

\begin{quote}
Which (map) graphs $G$ have  {\em planar\/} vertex-clique incidence bipartite graphs $B_G$?
\end{quote}

The answer for the case when the vertex-clique incidence bipartite graph is a tree has been recently found by the following theorem.

\begin{theorem}[\cite{LeL17,MnichRS16}]\label{thm:tree}
A connected graph is a map graph with a tree witness if and only if it is a block graph, if and only if its vertex-clique incidence bipartite graph is a tree.
\end{theorem}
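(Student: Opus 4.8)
The plan is to treat ``$B_G$ is a tree'' as the hub and to prove the two equivalences ``$G$ is a block graph $\Leftrightarrow$ $B_G$ is a tree'' and ``$G$ has a tree witness $\Leftrightarrow$ $B_G$ is a tree'' separately (the degenerate case $G=K_1$ being checked by hand, all three conditions holding trivially; so assume $G$ has an edge). The implication ``$B_G$ is a tree $\Rightarrow$ $G$ has a tree witness'' needs no work: a tree is a planar bipartite graph and $G=B_G^2[V]$ holds for every graph $G$, so $B_G$ is itself a tree witness of $G$.

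For the equivalence with block graphs I would invoke the classical forbidden-induced-subgraph description (a graph is a block graph iff it has no induced $C_k$ with $k\ge 4$ and no induced diamond $K_4-e$); in particular, two distinct maximal cliques of a block graph meet in at most one vertex, and every cycle of $G$, being $2$-connected, sits inside a single block, hence inside a single clique. If $B_G$ had a cycle, it would read $v_1Q_1v_2Q_2\cdots v_kQ_kv_1$ with the $v_i\in V$ and the $Q_i\in{\cal C}(G)$ pairwise distinct, $k\ge 2$: for $k=2$ the cliques $Q_1,Q_2$ would share the two vertices $v_1,v_2$; for $k\ge 3$ the $v_i$ would form a cycle of $G$, hence lie in one clique $Q$, forcing every $Q_i=Q$ -- contradiction in either case, so $B_G$ is a tree. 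Conversely, if $G$ is not a block graph it contains an induced $C_k$ ($k\ge 4$), which is triangle-free, so each of its edges lies in its own maximal clique and these, together with the cycle's vertices, form a cycle in $B_G$; or it contains an induced diamond on $\{a,b,c,d\}$ with non-edge $cd$, and then a maximal clique through $\{a,b,c\}$ and one through $\{a,b,d\}$ are distinct but both contain $\{a,b\}$, giving a $4$-cycle in $B_G$. Either way $B_G$ is not a tree.

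The substantive direction is ``$G$ has a tree witness $B=(V,W,E_B)$ $\Rightarrow$ $B_G$ is a tree''. Here I would first establish the key structural claim: in a tree witness, every maximal clique $Q$ of $G$ with $|Q|\ge 2$ equals $N_B(w_Q)$ for a \emph{unique} $w_Q\in W$. This rests on the statement that every clique of $G$ lies inside a single $W$-neighborhood in $B$: for an edge $v_1v_2$ the common neighbor in $B$ is unique since $B$ is a tree; for a triangle the three ``gates'' of its edges cannot be pairwise distinct, since otherwise they would close up into a genuine $6$-cycle in $B$; and for a $t$-clique, applying the triangle case to $\{v_1,v_2,v_i\}$ for each $i$ shows every $v_i$ is adjacent to the unique gate of $v_1v_2$. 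Maximality of $Q$ then upgrades ``$Q\subseteq N_B(w_Q)$'' to equality, and uniqueness follows because two such $W$-vertices would be two common neighbors in $B$ of a pair of vertices of $Q$. Given the claim, the map that is the identity on $V$ and sends $Q\mapsto w_Q$ on ${\cal C}(G)$ is injective and, since $v\in Q\iff vw_Q\in E_B$, it realizes $B_G$ as an induced subgraph of $B$; an induced subgraph of a tree is a forest, and $B_G$ is connected because $G$ is, so $B_G$ is a tree.

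The main obstacle is precisely this last direction -- in effect, recovering the canonical witness $B_G$ inside an arbitrary tree witness. The guiding idea (a clique should live in one $W$-neighborhood) is easy; the care is in verifying that the closed walks assembled in $B$ from the edge-gates are \emph{honest} cycles, with no repeated vertex or edge, before acyclicity of $B$ can be used, and in pushing the argument from triangles to cliques of all sizes. A minor point is that the block-graph equivalence leans on the forbidden-induced-subgraph characterization of block graphs, which should be cited rather than reproved here.
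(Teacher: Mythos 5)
Your proof is correct. The paper itself only cites Theorem~\ref{thm:tree} from \cite{LeL17,MnichRS16} without reproving it, but your argument is essentially the tree specialization of the machinery the paper does develop for general girth: your ``key structural claim'' (every maximal clique lives in a single $W$-neighborhood, with a unique gate) is Lemma~\ref{lem:star-clique} restricted to acyclic witnesses, your injection $Q\mapsto w_Q$ realizing $B_G$ as an induced subgraph of the witness is exactly the map $\phi$ in the proof of Theorem~\ref{thm:witness-girth2t}, and your diamond/induced-$C_k$ analysis of when $B_G$ is acyclic is the $t\to\infty$ limit of Theorem~\ref{thm:girth2t} (only remember to also note that $B_G$ is connected, which is immediate from connectivity of $G$, before concluding it is a tree).
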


Where a {\em block graph\/} is one in which every maximal $2$-connected subgraph (the blocks) is a clique. Equivalently, a block graph is a diamond-free chordal graph. (All terms used will given below.) As a consequence, map graphs with tree witness can be recognized in linear time.

In section~\ref{sec:half-squares} we will characterize half-squares of (not necessarily planar) bipartite graphs of large girth. Our structural results imply efficient polynomial time recognition for these half-squares. In section~\ref{sec:witness} we will consider map graphs having witnesses of large girth. It turns out that map graphs having witnesses of large girth (at least eight) admit a similar characterization as in case of tree witnesses stated in Theorem~\ref{thm:tree} above. As a consequence, we will see that such map graphs can be recognized in cubic time. 

All graphs considered are simple and connected. The complete graph on $n$ vertices, the complete bipartite graph with $s$ vertices in one color class and $t$ vertices in the other color class, the cycle with $n$ vertices are denoted $K_n, K_{s,t}$, and $C_n$, respectively. A $K_3$ is also called a {\em triangle\/}, 
a complete bipartite graph $K_{1,n}$ is also called a \emph{star}. The {\em diamond\/}, denoted $K_4-e$, is the graph obtained from the $K_4$ by deleting an edge.

Let $F$ be a graph. {\em $F$-free graphs\/} are those having no induced subgraphs isomorphic to $F$. {\em Chordal graphs\/} are precisely the $C_k$-free graphs, $k\ge 4$. It is well known that block graphs are precisely the diamond-free chordal graphs.

The neighborhood of a vertex $v$ in $G$, denoted $N_G(v)$, is the set of all vertices in $G$ adjacent to $v$; if the context is clear, we simply write $N(v)$. 

For a subset $U\subseteq V$,
$G[U]$ is the subgraph of $G$ induced by $U$, and $G-U$ stands for $G[V\setminus U]$. We write $B=(X,Y,E_B)$ for bipartite graphs with a bipartition into stable sets (color classes) $X$ and $Y$. 
If $G=(V,E_G)$ is a map graph and $B=(V,W,E_B)$ a witness of $G$, then we call $V$ the set of {\em vertices\/} and $W$ the set of {\em points\/}.  

\section{Half-squares of bipartite graphs with girth constraints}\label{sec:half-squares}

Recall that every graph is a half-square of a bipartite graph with girth at least six. 
In this section, we will characterize those graphs that are half-squares of bipartite graphs with large girth. The following facts are easy to verify. 

\begin{observation}\label{obs:cycle}
Let $B=(V,W,E_B)$ be a (not necessarily planar) bipartite graph and let $G=B^2[V]$. Then
\begin{itemize}
\item[\em (i)] 
every induced cycle $C_\ell$, $\ell\ge 4$, in $G$ stems from an induced cycle $C_{2\ell}$ in $B$. That is, if $C_\ell = v_1v_2\ldots v_\ell v_1$ is an induced cycle in $G$, then there are points $w_1, w_2 \ldots, w_{\ell}$ in $W$ such that $v_1w_1v_2w_2 \ldots v_\ell w_\ell v_1$ is an induced cycle in $B$;
\item[\em (ii)] 
every triangle $C_3$ in $G$ stems from an induced $C_6$ or from a $K_{1,3}$ in $B$.
\end{itemize}
\end{observation}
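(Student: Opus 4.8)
The plan is to exploit the elementary fact that every edge of $G=B^2[V]$ is ``witnessed'' by a point of $W$: if $uv\in E_G$, then $u$ and $v$ have a common neighbor in $W$. So for each consecutive pair along the cycle or triangle in $G$ we pick such a witness point, and then we show that after this choice we obtain the desired subgraph of $B$.

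For part (i): let $C_\ell=v_1v_2\ldots v_\ell v_1$ be an induced cycle in $G$ with $\ell\ge 4$. For each $i$ choose a point $w_i\in W$ adjacent in $B$ to both $v_i$ and $v_{i+1}$ (indices mod $\ell$). First I would argue that the $w_i$ are pairwise distinct and that $N_B(w_i)\cap\{v_1,\ldots,v_\ell\}=\{v_i,v_{i+1}\}$: if $w_i$ were adjacent to some $v_j$ with $j\notin\{i,i+1\}$, then $v_iv_j\in E_G$ and $v_{i+1}v_j\in E_G$, and since $\ell\ge 4$ at least one of these is a chord of $C_\ell$, contradicting that $C_\ell$ is induced; distinctness of the $w_i$ follows similarly (if $w_i=w_j$ then $w_i$ sees four cycle-vertices, again forcing a chord when $\ell\ge 4$). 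Hence $v_1w_1v_2w_2\ldots v_\ell w_\ell v_1$ is a cycle of length $2\ell$ in $B$, and the neighborhood condition just established, together with the fact that $V$ and $W$ are independent sets in $B$, shows it has no chord, i.e.\ it is induced. The one point needing a little care is why $\ell\ge 4$ is essential here — it is exactly what rules out a common witness point for two cycle-vertices at distance two along the cycle, which is what would otherwise create a chord.

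For part (ii): let $v_1v_2v_3v_1$ be a triangle in $G$ and choose, as before, points $w_1$ (common neighbor of $v_1,v_2$), $w_2$ (of $v_2,v_3$), $w_3$ (of $v_3,v_1$) in $W$. If some single point $w$ is adjacent to all three of $v_1,v_2,v_3$, then $\{v_1,v_2,v_3,w\}$ induces a $K_{1,3}$ in $B$ (the $v_i$ form an independent set in $B$), and we are done. Otherwise no point of $W$ dominates the triangle; in particular $w_1,w_2,w_3$ can be chosen pairwise distinct, and each $w_i$ is adjacent to exactly two of $v_1,v_2,v_3$ (it cannot be adjacent to all three by assumption, and it is adjacent to the designated two by construction). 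Then $v_1w_1v_2w_2v_3w_3v_1$ is a cycle of length $6$ in $B$ with no chord, hence an induced $C_6$.

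The main obstacle, such as it is, is purely bookkeeping: making the case analysis in (ii) clean, i.e.\ correctly separating the case where some witness point sees all three triangle-vertices (giving the $K_{1,3}$) from the case where no point does (giving the induced $C_6$), and verifying in the latter case that the three chosen points are genuinely distinct and that no ``accidental'' extra edge between a $v_i$ and a $w_j$ spoils inducedness. Once the ``induced cycle has no common witness for distance-two vertices'' principle from part (i) is in hand, both statements follow with no computation.
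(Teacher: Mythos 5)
Your proof is correct, and it is exactly the standard verification the paper has in mind (the paper states Observation~\ref{obs:cycle} without proof, as "easy to verify"): pick a witness point for each edge, use inducedness of the cycle to show no point can serve two non-consecutive cycle vertices, and in the triangle case split on whether some point dominates all three vertices.
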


In this section, we prove the following characterizations of half-squares of girth constrained bipartite graphs.

\begin{theorem}\label{thm:girth2t}
Let $t\ge 4$ be an integer. The following statements are equivalent for every graph $G=(V,E_G)$.
\begin{itemize}
\item[\em (i)]
$G$ is half-square of a bipartite graph with girth at least $2t$;
\item[\em (ii)] 
$G$ is diamond-free and $C_\ell$-free for every $4\le \ell\le t-1$; 
\item[\em (iii)]
The vertex-clique incidence bipartite graph $B_G$ of $G$ has girth at least $2t$. 
\end{itemize}
\end{theorem}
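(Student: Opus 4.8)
The plan is to prove the cycle of implications (i) $\Rightarrow$ (ii) $\Rightarrow$ (iii) $\Rightarrow$ (i), since (iii) $\Rightarrow$ (i) is immediate: $G = B_G^2[V]$ always holds, so if $B_G$ has girth at least $2t$ then $G$ is the half-square of a bipartite graph of girth at least $2t$, namely $B_G$ itself.

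For (i) $\Rightarrow$ (ii), suppose $G = B^2[V]$ with $B=(V,W,E_B)$ of girth at least $2t$. First I would use Observation~\ref{obs:cycle}(i): an induced $C_\ell$ in $G$ with $4 \le \ell \le t-1$ would lift to an induced $C_{2\ell}$ in $B$, and $2\ell \le 2(t-1) < 2t$ contradicts the girth bound; hence $G$ is $C_\ell$-free for $4 \le \ell \le t-1$. For diamond-freeness, suppose $G$ contains an induced diamond on vertices $a,b,c,d$ where $ac \notin E_G$ and $b,d$ are adjacent to all of $a,c$ and to each other. Each edge of $G$ comes from a common neighbor in $W$; chasing the witnessing points for the five edges $ab, bc, cd, da, bd$ and using that $a \not\sim c$ (so no single point of $W$ sees both $a$ and $c$), I expect to extract a short closed walk, hence a short cycle, in $B$ of length less than $2t \le 8$ — concretely a $C_4$ or $C_6$ in $B$. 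This case analysis is the step I expect to be the main obstacle: one has to rule out that the various witnessing points coincide in ways that avoid a short cycle, and carefully produce an actual induced short cycle (or at least a short cycle, which suffices to bound the girth) rather than merely a closed walk.

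For (ii) $\Rightarrow$ (iii), assume $G$ is diamond-free and $C_\ell$-free for $4 \le \ell \le t-1$, and I must show $B_G$ has girth at least $2t$. Since $B_G$ is bipartite, every cycle in it has even length $2k$ and alternates between vertices of $V$ and maximal cliques of $G$; such a cycle reads $v_1 Q_1 v_2 Q_2 \cdots v_k Q_k v_1$ with consecutive $v_i, v_{i+1}$ both lying in $Q_i$. A cycle of length $4$ would give two distinct maximal cliques sharing two vertices $v_1, v_2$; since $Q_1, Q_2$ are maximal and distinct, there are $x \in Q_1 \smin Q_2$ and $y \in Q_2 \smin Q_1$, and then $\{v_1, v_2, x, y\}$ induces a diamond or a $K_4$ minus possibly the edge $xy$ — in all cases a diamond or something containing an induced diamond is forced, contradicting diamond-freeness. (The key sub-fact here, which I would isolate as it also drives the general case, is that in a diamond-free graph two distinct maximal cliques share at most one vertex.) For a cycle of length $2k$ with $3 \le k \le t-1$, consecutive cliques $Q_{i-1}, Q_i$ meet in exactly the single vertex $v_i$, and I would argue that picking the $v_i$ together with, if needed, one extra vertex from each $Q_i$ produces an induced cycle $C_k$ (or a shorter chordless cycle, or a diamond) in $G$ of length at most $t-1$; the chords one must worry about are edges $v_i v_j$ for non-consecutive $i,j$, and each such chord would force two of the maximal cliques to share two vertices, again contradicting diamond-freeness. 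This reduction from a short cycle in $B_G$ to a short induced cycle (or diamond) in $G$, making sure the resulting subgraph is genuinely induced and genuinely short, is the technical heart of this direction and, together with the diamond extraction in (i) $\Rightarrow$ (ii), the place where the proof needs the most care.
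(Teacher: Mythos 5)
Your overall route is the same as the paper's ((i)$\Rightarrow$(ii)$\Rightarrow$(iii)$\Rightarrow$(i), with the key sub-fact that in a diamond-free graph two distinct maximal cliques share at most one vertex), but two of the steps you flag as delicate are in fact stated incorrectly or left without the idea that makes them work. First, in the length-$4$ case of (ii)$\Rightarrow$(iii): for \emph{arbitrary} $x\in Q_1\setminus Q_2$ and $y\in Q_2\setminus Q_1$ the set $\{v_1,v_2,x,y\}$ may induce a $K_4$, and a $K_4$ contains no induced diamond, so "in all cases a diamond \ldots is forced" is false as written. You must choose $x$ and $y$ \emph{non-adjacent}, and such a pair exists precisely because $Q_1$ and $Q_2$ are distinct \emph{maximal} cliques: if every vertex of $Q_1\setminus Q_2$ were adjacent to every vertex of $Q_2\setminus Q_1$, then $Q_1\cup Q_2$ would be a clique, contradicting maximality. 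This is exactly how the paper's Claim~1 is phrased. Second, your treatment of a cycle of length $2k$ breaks down at $k=3$: producing "an induced cycle $C_k$" there yields a triangle, which condition (ii) does not forbid, so you get no contradiction. The paper isolates this as a separate claim ($B_G$ is $C_6$-free): for an induced $C_6=v_1Q_1v_2Q_2v_3Q_3v_1$, the vertex $v_3$ is adjacent to $v_1,v_2\in Q_1$ but $v_3\notin Q_1$, so by maximality of $Q_1$ there is $u\in Q_1$ non-adjacent to $v_3$, and $\{v_1,v_2,v_3,u\}$ induces a diamond. Your phrase "one extra vertex from each $Q_i$" gestures at this but does not supply the maximality argument that actually produces the non-neighbor.

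Two smaller points. Your chord-elimination claim for $4\le k\le t-1$ ("each such chord would force two of the maximal cliques to share two vertices") is not right as stated: a chord $v_iv_j$ only puts $v_i,v_j$ in \emph{some} common maximal clique, not necessarily one of the $Q_\ell$ on the cycle. The clean fix, which is what the paper does, is to work with a \emph{shortest} cycle of $B_G$: then non-consecutive $v_i,v_j$ can have no common neighbor in $B_G$ at all (it would shortcut the cycle), so $v_1,\dots,v_k$ automatically induce a chordless $C_k$ in $G$. Finally, the diamond extraction in (i)$\Rightarrow$(ii) that you leave as "the main obstacle" does go through and is short: by Observation~\ref{obs:cycle}(ii) each of the two triangles of the diamond stems from an induced $C_6$ or from a star in $B$; a $C_6$ already violates girth $\ge 2t\ge 8$, and if both stem from stars their centers $w_1,w_2$ are distinct (no point sees both non-adjacent vertices of the diamond), so $w_1,w_2$ together with the two shared vertices of the triangles form a $C_4$ in $B$.
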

In particular, 
\begin{itemize}
\item a graph is half-square of a bipartite graph with girth at least eight if and only if it is diamond-free, and 
\item a graph is half-square of a bipartite graph with girth at least ten if and only if it is diamond-free and $C_4$-free.
\end{itemize}
It is also interesting to observe that, as $t$ grows larger, Theorem~\ref{thm:girth2t} gets closer and closer to Theorem~\ref{thm:tree} on map graphs with tree witness. (Recall that block graphs are diamond-free and $C_\ell$-free for all $\ell \ge 4$, and a connected graph of girth $\infty$ is a tree.)

\smallskip
\begin{proof}[of Theorem~\ref{thm:girth2t}]

\noindent
(i) $\Rightarrow$ (ii): Let $G=B^2[V]$ for some bipartite graph $B=(V,W,E_B)$ of girth at least $2t$. Then, by Observation~\ref{obs:cycle} (i), $G$ cannot contain any induced cycle $C_\ell$ for $4\le \ell\le t-1$ (otherwise, $B$ would contain an induced cycle of length $2\ell < 2t$). Furthermore, by Observation~\ref{obs:cycle} (ii), $G$ cannot contain an induced diamond. Otherwise, $B$ would contain an induced cycle of length $6<2t$ (if one of the two triangles of the diamond stems from a $C_6$ in $B$) or an induced cycle of length $4<2t$ (if both triangles of the diamond stem from stars in $B$).

\smallskip
\noindent
(ii) $\Rightarrow$ (iii): Let $G$ have no induced diamond and no induced $C_\ell$, $4\le\ell\le t-1$. Recall that $G=B_G^2[V]$. For notational simplicity, write $B=B_G$.  

\smallskip
\noindent
{\em Claim 1}.\, $B$ is $C_4$-free. If not, let $w_1, w_2\in {\cal C}(G)$ be two points belonging to an induced $C_4$ in $B$. Since $N_B(w_1)$ and $N_B(w_2)$ are two distinct maximal cliques in $G$, there is a vertex $u\in N_B(w_1)\setminus N_B(w_2)$ not adjacent to a vertex $v\in N_B(w_2)\setminus N_B(w_1)$ in $G$. But then $u, v$, and two vertices in $N_B(w_1)\cap N_{B}(w_2)$ induce a diamond in $G$. 

\smallskip
\noindent
{\em Claim 2}.\, $B$ is $C_6$-free. If not, let $v_1w_1v_2w_2v_3w_3v_1$ be an induced $C_6$ in $B$. Then, in $G$, $v_3$ is adjacent to two vertices $v_1, v_2$ of the maximal clique $Q=N_B(w_1)$. Since $v_3\not\in Q$, there exists a vertex $u$ in $Q\setminus\{v_1,v_2\}$ not adjacent to $v_3$ in $G$. But then $G$ contains an induced diamond induced by $v_1,v_2,v_3$ and $u$. 

\smallskip
Now, consider a shortest cycle in $B$, $C=v_1w_1v_2w_2 \ldots v_\ell w_\ell v_1$, of length $2\ell$. By Claims~1 and~2, $\ell\ge 4$. Moreover, by the minimality of $C$, $N_{B}(v_i)\cap N_{B}(v_j)=\emptyset$ for $|i-j|>1$ (indices taken modulo $\ell$). Thus, $v_1,v_2\ldots, v_\ell$ induce an cycle of length $\ell$ in $G$, hence $\ell\ge t$. That is, $B$ has girth $2\ell\ge 2t$.

\smallskip
\noindent
(iii) $\Rightarrow$ (i): This implication is obvious as for any graph $G$, $G=B_G^2 [V]$. 
\end{proof}

\medskip
Theorem~\ref{thm:girth2t} has algorithmic implications for recognizing half-squares of girth constrained bipartite graphs. 
In the remainder of this note, $n$ and $m$ denotes the vertex number and the edge number, respectively, of the graphs considered.

\paragraph{Half-squares of bipartite graphs of girth at least eight} 
These graphs are precisely the diamond-free graphs, and hence can be recognized in $O(m^{1.5})$ time~\cite{EisenbrandG04}. Note that in a diamond-free graph, each edge belongs to exactly one maximal clique, hence there are at most $m$ maximal cliques.  Since all maximal cliques in a graph can be listed in time $O(nm)$ per generated clique~\cite{TsukiyamaIAS77,ChibaN85}, we can list all maximal cliques in $G$ in time $O(m^2n)$. Thus, assuming $G$ is diamond-free,  the vertex-clique incidence bipartite graph $B_G$ of $G$ can be constructed in time $O(m^2n)$. Thus, we obtain
\begin{corollary}\label{cor:girth8}
Given an $n$-vertex $m$-edge graph $G$, it can be recognized in $O(m^{1.5})$ time if $G$ is half-square of a bipartite graph $B$ with girth at least $8$. If so, such a bipartite graph $B$ can be constructed in $O(m^2n)$ time. 
\end{corollary}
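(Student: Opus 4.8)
The plan is to prove Corollary~\ref{cor:girth8} by assembling, in order, the ingredients already laid out in the paragraph preceding it, each of which is a citation to a known algorithmic result. First, by the ``in particular'' consequence of Theorem~\ref{thm:girth2t} (the case $t=4$), the class of graphs that are half-squares of bipartite graphs of girth at least $8$ is exactly the class of diamond-free graphs. So the recognition question reduces to: decide whether the input $G$ is diamond-free. I would invoke the algorithm of Eisenbrand and Grotschel~\cite{EisenbrandG04} (for recognizing diamond-free graphs, equivalently detecting an induced $K_4-e$) which runs in $O(m^{1.5})$ time; this settles the recognition claim.

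For the construction claim, I would then argue under the assumption that $G$ has already passed the diamond-free test. The key structural observation is that in a diamond-free graph every edge lies in exactly one maximal clique: if an edge $uv$ were in two maximal cliques, pick $x$ in one and $y$ in the other with $x,y$ nonadjacent, and then $u,v,x,y$ induce a diamond. Hence the maximal cliques of $G$ are pairwise edge-disjoint, so there are at most $m$ of them. This bound is what keeps the enumeration cheap. I would then run any polynomial-delay maximal-clique enumeration algorithm---Tsukiyama et al.~\cite{TsukiyamaIAS77} or Chiba--Nishizeki~\cite{ChibaN85}---which outputs each maximal clique with delay $O(nm)$; with at most $m$ cliques this totals $O(nm^2)$. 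Building the vertex-clique incidence bipartite graph $B_G$ is then immediate: for each generated maximal clique $Q$, record the edges $vQ$ for $v\in Q$, which is subsumed in the same time bound. By Theorem~\ref{thm:girth2t}(iii), this $B_G$ has girth at least $8$ and, since $G=B_G^2[V]$ always holds, it is a valid witness of the half-square; so $B_G$ is the desired bipartite graph $B$, constructed in $O(m^2n)$ time.

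There is essentially no hard step here: the corollary is a direct packaging of Theorem~\ref{thm:girth2t} with off-the-shelf algorithms, and the only thing requiring a short argument of our own is the edge-disjointness of maximal cliques in a diamond-free graph, which bounds their number by $m$ and thereby makes the clique-enumeration cost polynomial rather than potentially exponential. The one point to state carefully is that the $O(m^2n)$ construction bound presupposes $G$ is diamond-free---for a general input one first spends $O(m^{1.5})$ time to check this, and only then proceeds to build $B_G$---so the two time bounds in the statement refer to the two separate tasks (recognition, then construction) and should not be conflated.
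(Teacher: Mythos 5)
Your proposal is correct and follows essentially the same route as the paper: reduce recognition to diamond-freeness via Theorem~\ref{thm:girth2t} and cite the $O(m^{1.5})$ diamond-detection algorithm, then bound the number of maximal cliques by $m$ (each edge lies in exactly one maximal clique in a diamond-free graph) and enumerate them with $O(nm)$ delay to build $B_G$ in $O(m^2n)$ time. Your added justification of the edge-disjointness of maximal cliques is a correct elaboration of a fact the paper merely asserts.
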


\paragraph{Half-squares of bipartite graphs of girth at least $2t$}  
Let $t\ge 5$ be an integer. Half-squares of bipartite graphs of girth at least $2t$ are precisely the graphs $G$ with $B_G$ is of girth at least $2t$. Note that such graphs $G$ are diamond-free and $C_4$-free (Theorem~\ref{thm:girth2t} (ii)). Thus, we first recognize if the given graph $G$ is diamond-free and $C_4$-free in time $O(m^{\frac{2}{3}}n)$~\cite{EschenHSS11}. If so, we list all maximal cliques in $G$, there are at most $O(n\sqrt{n})$, in the same time complexity~\cite{EschenHSS11}, to construct the vertex-clique incidence bipartite graph $B_G$ of $G$. Since $B_G$ has $n+O(n\sqrt{n})=O(n\sqrt{n})$ vertices, the girth of (the bipartite graph) $B_G$ can be computed in $O((n\sqrt{n})^2)=O(n^3)$ time~\cite{ItaiR78,YusterZ97}. Thus we obtain
  
\begin{corollary}\label{cor:girth2t}
Given an $n$-vertex graph $G$ and an integer $t\ge 5$, it can be recognized in $O(n^3)$ time if $G$ is half-square of a bipartite graph $B$ with girth at least $2t$. If so, such a bipartite graph $B$ can be constructed in the same  time complexity.  
\end{corollary}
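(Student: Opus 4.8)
The final statement to prove is Corollary~\ref{cor:girth2t}, which is essentially an algorithmic packaging of Theorem~\ref{thm:girth2t} together with known subroutines. So the plan is not to prove a deep structural fact but to verify that the pieces fit together with the claimed running time.

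First I would invoke Theorem~\ref{thm:girth2t} with $t\ge 5$: a graph $G$ is a half-square of a bipartite graph of girth at least $2t$ if and only if $B_G$ has girth at least $2t$, and such $G$ are necessarily diamond-free and $C_4$-free (equivalence of (i), (ii) with the $\ell=4$ case). The algorithm then proceeds in three stages. Stage one: test whether $G$ is diamond-free and $C_4$-free; if not, output ``no'' by the (i)$\Rightarrow$(ii) direction. I would cite the $O(m^{2/3}n)$-time algorithm of Eschen et al.~\cite{EschenHSS11} for this, and note $m=O(n\sqrt n)$ in $C_4$-free graphs so this is $O(n^2)$. Stage two: assuming $G$ passes, enumerate all maximal cliques of $G$ — in a diamond-free, $C_4$-free graph there are only $O(n\sqrt n)$ of them and they can be listed within the same time bound~\cite{EschenHSS11} — and build $B_G$. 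Stage three: compute the girth of the bipartite graph $B_G$; since $B_G$ has $n+O(n\sqrt n)=O(n\sqrt n)$ vertices, the $O(N^2)$ girth algorithm for $N$-vertex graphs~\cite{ItaiR78,YusterZ97} runs in $O(n^3)$ time. Accept iff this girth is at least $2t$; by the (ii)$\Leftrightarrow$(iii) equivalence together with the monotonicity that $B_G$ has girth $\ge 2t$ exactly captures membership, this is correct, and in the affirmative case $B=B_G$ is the desired witness (using $G=B_G^2[V]$), constructed within the same bound.

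The main obstacle — really the only nontrivial point — is making sure the running-time accounting is honest: one must check that $G$ being $C_4$-free forces $m=O(n^{3/2})$ (a standard Kővári–Sós–Turán-type bound), that the number of maximal cliques is $O(n^{3/2})$ and they are enumerable in time $O(m^{2/3}n)=O(n^2)$ rather than the generic $O(nm)$ per clique, and that the $O(N^2)$-time girth routine applies to bipartite graphs with $N=O(n^{3/2})$ vertices so that the dominating term is $O(n^3)$. One also wants to observe that constructing $B_G$ from the list of maximal cliques costs only $O(\sum_{Q}|Q|)=O(m+n)$ additional time. Since all of these are either quoted results or elementary counting, I expect the proof to be short: state the three stages, cite the subroutines, tally the costs, and invoke Theorem~\ref{thm:girth2t} for correctness in both directions.
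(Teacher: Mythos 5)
Your proposal is correct and follows essentially the same route as the paper: test (diamond, $C_4$)-freeness via Eschen et al., enumerate the $O(n\sqrt{n})$ maximal cliques to build $B_G$, compute its girth with the $O(N^2)$ algorithm on $N=O(n\sqrt{n})$ vertices, and appeal to Theorem~\ref{thm:girth2t} for correctness. The extra bookkeeping you add (the $m=O(n^{3/2})$ bound for $C_4$-free graphs and the $O(m+n)$ cost of assembling $B_G$) is accurate but does not change the argument.
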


It is interesting to note that half-squares of bipartite graphs with girth at least ten, i.e., (diamond, $C_4$)-free graphs, have been very recently discussed in the context of social networks: In~\cite{FoxRSWW18}, {\em $c$-closed graphs\/} are introduced as those graphs, in which every two vertices with at least $c$ common neighbors are adjacent. Thus, $2$-closed graphs are precisely the (diamond, $C_4$)-free graphs, i.e., the half-squares of bipartite graphs with girth at least ten.

\section{Map graphs having witnesses with girth constraints}\label{sec:witness}

While {\em any\/} graph is the half-square of a bipartite graph of girth at least six, not every {\em map graph\/} has a witness of girth at least six. Such a map graph admitting no witness of girth at least six is depicted in Figure~\ref{fig:girth4} below.

\begin{figure}[H]
\begin{center}
\begin{tikzpicture}[scale=.5]
\tikzstyle{vertex}=[draw,circle, text width=2.7mm, align=center, inner sep=0pt]; 
\tikzstyle{point}=[draw,rectangle,inner sep=2.5pt]; 

\node[vertex] (1) at (0,5)  {\tiny 1}; 
\node[vertex] (2) at (6,5)  {\tiny 2};  
\node[vertex] (3) at (6,1)  {\tiny 3};  
\node[vertex] (4) at (0,1)  {\tiny 4}; 
\node[vertex] (a) at (3,4)  {\tiny a};  
\node[vertex] (b) at (3,2)  {\tiny b};  
\node[vertex] (c) at (8,3)  {\tiny c};  

\draw (1) -- (2) -- (3) -- (4) -- (1);
\draw (a) -- (b) -- (c) -- (a);
\draw (1) -- (a) -- (2); \draw (3) -- (a) -- (4);
\draw (1) -- (b) -- (2); \draw (3) -- (b) -- (4);
\draw (1) to [out=45, in=80] (c);
\draw (c) -- (2); 
\draw (3) -- (c);
\draw (c) to [out=-80, in=-45] (4);
\end{tikzpicture}
\qquad
\begin{tikzpicture}[scale=.5]
\tikzstyle{vertex}=[draw,circle, text width=2.7mm, align=center, inner sep=0pt]; 
\tikzstyle{point}=[draw,rectangle,inner sep=2.5pt]; 

\node[vertex] (1) at (0,5)  {\tiny 1}; 
\node[vertex] (2) at (6,5)  {\tiny 2}; 
\node[vertex] (3) at (6,1)  {\tiny 3}; 
\node[vertex] (4) at (0,1)  {\tiny 4}; 
\node[vertex] (a) at (3,4)  {\tiny a}; 
\node[vertex] (b) at (3,2)  {\tiny b};  
\node[vertex] (c) at (8,3)  {\tiny c};  

\node[point] (w1) at (3,6)  {}; 
\node[point] (w2) at (5,3)  {}; 
\node[point] (w3) at (3,0)  {}; 
\node[point] (w4) at (1,3)  {}; 

\draw (1) -- (w1) -- (2) -- (w2) -- (3) -- (w3) -- (4) -- (w4) -- (1);
\draw (w2) -- (a) -- (w4) -- (b) -- (w2) -- (c);
\draw (w1) to [out=40, in=70] (c);
\draw (w3) to [out=-40, in=-70] (c);
\end{tikzpicture}
\end{center}
\caption{A map graph (left) and a witness of it (right). It can be verified by inspection that this map graph has no witness of girth at least six.\label{fig:girth4}}
\end{figure}
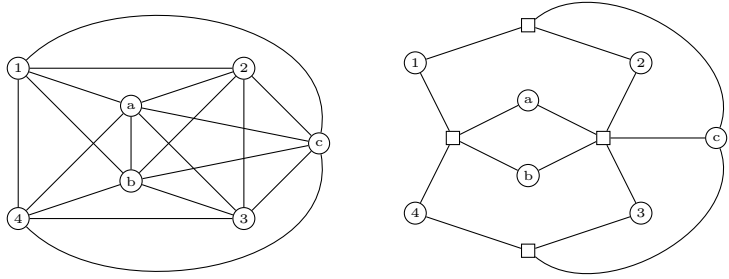

Notice that {\em every planar\/} graph is a map graph with a witness of girth at least six (e.g., its subdivision). So, map graphs admitting witnesses of girth at least six properly include all planar graphs. We are not able to characterize and recognize map graphs having witnesses of girth at least six. 

In this section, we deal with map graphs having witnesses of girth at least eight. Note that not every planar graph has a witness of girth at least eight; the graph depicted in Figure~\ref{fig:diamondfree} is diamond-free, hence it is the half-square of a bipartite graph of girth at least eight. However, as we will see (Theorem~\ref{thm:witness-girth2t}), this diamond-free {\em planar\/} graph does {\em not\/} have a {\em witness\/} of girth at least eight.  

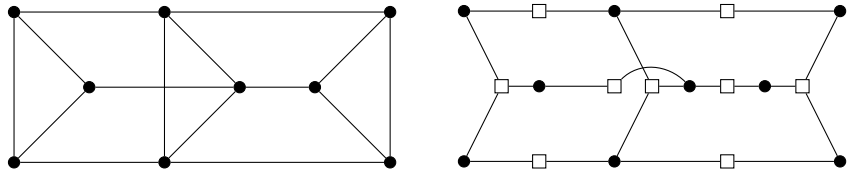
\begin{figure}[htb]
\begin{center}
\begin{tikzpicture}[scale=.5]
\tikzstyle{vertex}=[draw,circle,inner sep=1.5pt, fill=black]; 
\tikzstyle{point}=[draw,rectangle,inner sep=2.5pt]; 

\node[vertex] (1) at (0,0)   {}; 
\node[vertex] (2) at (4,0)   {}; 
\node[vertex] (3) at (10,0)  {}; 
\node[vertex] (4) at (2,2)   {}; 
\node[vertex] (5) at (6,2)   {}; 
\node[vertex] (6) at (8,2)   {};  
\node[vertex] (7) at (0,4)   {}; 
\node[vertex] (8) at (4,4)   {}; 
\node[vertex] (9) at (10,4)  {}; 

\draw (1) -- (2) -- (3) -- (6) -- (9) -- (8) -- (7) -- (1) -- (4) -- (7);
\draw (4) -- (5) -- (6);
\draw (2) -- (5) -- (8) -- (2);
\draw (3) -- (9);
\end{tikzpicture}
\qquad
\begin{tikzpicture}[scale=.5]
\tikzstyle{vertex}=[draw,circle,inner sep=1.5pt, fill=black]; 
\tikzstyle{point}=[draw,rectangle,inner sep=2.5pt]; 

\node[vertex] (1) at (0,0)   {}; 
\node[vertex] (2) at (4,0)   {}; 
\node[vertex] (3) at (10,0)   {}; 
\node[vertex] (4) at (2,2)   {}; 
\node[vertex] (5) at (6,2)   {}; 
\node[vertex] (6) at (8,2)  {};  
\node[vertex] (7) at (0,4)   {}; 
\node[vertex] (8) at (4,4)   {}; 
\node[vertex] (9) at (10,4)   {}; 

\node[point] (w1) at (1,2)   {}; 
\node[point] (w2) at (5,2)   {}; 
\node[point] (w3) at (9,2)   {}; 
\node[point] (w4) at (2,0)   {}; 
\node[point] (w5) at (7,0)   {}; 
\node[point] (w6) at (4,2)  {};  
\node[point] (w7) at (7,2)  {}; 
\node[point] (w8) at (2,4)   {}; 
\node[point] (w9) at (7,4)   {}; 

\draw (1) -- (w1) -- (4); \draw (7) -- (w1);
\draw (2) -- (w2) -- (5); \draw (8) -- (w2);
\draw (3) -- (w3) -- (6); \draw (9) -- (w3);

\draw (1) -- (w4) -- (2) -- (w5) --(3);
\draw (4) -- (w6); \draw (w6) to [out=45, in=135] (5); 
\draw (5) -- (w7) -- (6);
\draw (7) -- (w8) -- (8) -- (w9) --(9);
\end{tikzpicture}

\end{center}
\caption{A diamond-free planar graph (left) with non-planar vertex-clique incidence bipartite graph (right).\label{fig:diamondfree}}
\end{figure}

In discussing witnesses of girth at least eight, the following fact is useful; 
it has been observed and proved in~\cite{LeL17}. 
To make the paper self-contained, we include the proof here.
\begin{lemma}\label{lem:star-clique}
Let $G=B^2[V]$ for some (not necessarily planar) bipartite graph $B=(V,W,E_B)$. If $B$ has no induced cycle of length six, then every maximal clique $Q$ in $G$ stems from a star in $B$, i.e., there is a point $w\in W$ such that $Q=N_B(w)$.
\end{lemma}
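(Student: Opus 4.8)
The plan is to fix a maximal clique $Q$ in $G=B^2[V]$ and to produce a single point $w\in W$ that is adjacent in $B$ to every vertex of $Q$; maximality of $Q$ then forces $Q=N_B(w)$, since any vertex in $N_B(w)$ is pairwise at distance two in $B$ from all of $Q$, hence adjacent to all of $Q$ in $G$, hence in $Q$. So the real content is the existence of such a common point $w$.

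First I would handle the small cases directly: if $|Q|\le 2$, a common neighbour point is immediate from the definition of $B^2[V]$ (an edge of $G$ comes from a common neighbour in $B$, and a single vertex lies in some point's neighbourhood because $B$ has no isolated vertices — or, if it does, that is a degenerate case one can dispose of). For $|Q|\ge 3$ the idea is to argue by contradiction: suppose no point of $W$ dominates all of $Q$. Pick three vertices $u,v,x\in Q$. Each of the edges $uv$, $vx$, $ux$ of $G$ stems from a point of $B$: say $w_{uv}$ adjacent to $u,v$, and $w_{vx}$ adjacent to $v,x$, and $w_{ux}$ adjacent to $u,x$. If one of these points is adjacent to the third vertex as well, we have a common point for $\{u,v,x\}$; otherwise $u\,w_{uv}\,v\,w_{vx}\,x\,w_{ux}\,u$ traces a closed walk of length six in $B$, and the hypothesis that $B$ has no induced $C_6$ must be leveraged — but a closed walk of length six need not be an induced six-cycle, so I would need to check that the three points are distinct and that no chords exist, or else extract a shorter cycle, which would be even better since $B$ is bipartite and has no $C_4$ would be needed. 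This matches exactly the structure of Observation~\ref{obs:cycle}(ii), which already says a triangle of $G$ stems from an induced $C_6$ or a $K_{1,3}$ in $B$; so for three vertices of $Q$, the no-$C_6$ hypothesis yields a $K_{1,3}$, i.e.\ a common point.

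The step I expect to be the main obstacle is promoting this from triples to the whole clique $Q$: knowing that every three vertices of $Q$ share a common point does not, by itself, give one point common to all of $Q$ — this is a Helly-type statement that needs a genuine argument. The plan is to induct on $|Q|$ or to argue globally: take a point $w$ common to some triple, let $A=N_B(w)\cap Q$, and suppose $A\ne Q$; pick $z\in Q\setminus A$ and two vertices $a,b\in A$. Then $z$ has a common point $w'$ with $a$ and a common point $w''$ with $b$, and the paths $z\,w'\,a\,w\,b\,w''\,z$ again form a closed walk of length six; feeding this into Observation~\ref{obs:cycle}(ii) (after checking the walk is an induced $C_6$, or extracting a shorter even cycle, contradicting no-$C_6$/no-$C_4$) should force $w'=w''$ adjacent to all of $z,a,b$, and then one shows this point can be taken to be $w$ itself or that $A$ can be enlarged, driving the induction. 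Care is needed because $B$ is not assumed $C_4$-free here, only $C_6$-free; I would check whether a $C_4$ in $B$ causes trouble, but in fact a $C_4$ on $z,w',a,w''$ would make $w',w''$ two points with $\{z,a\}\subseteq N_B(w')\cap N_B(w'')$, and since $N_B(w'),N_B(w'')$ are cliques of $G$ this merges them — so one can always assume the common point is unique enough to proceed. The bipartiteness and the absence of length-six induced cycles together with the maximal-clique structure of the $N_B(w)$'s are the only tools, and the write-up should be short once the Helly step is set up cleanly.
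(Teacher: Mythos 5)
Your reduction to finding a single point $w$ with $Q\subseteq N_B(w)$ is right, and so is the observation that the triple case is exactly Observation~\ref{obs:cycle}(ii): with no induced $C_6$, every triangle of $G$ gets a common point. But the step you yourself identify as the main obstacle --- promoting this from triples to all of $Q$ --- is exactly where your sketch does not close, and the configuration you propose for it cannot close it. Fixing $w$ with $A=N_B(w)\cap Q$, taking $z\in Q\setminus A$ and $a,b\in A$, and analysing the walk $z\,w'\,a\,w\,b\,w''\,z$ only yields (again) a point adjacent to the three vertices $z,a,b$; such a point may well miss every other element of $A$, so it neither ``can be taken to be $w$ itself'' nor enlarges $A$, even if you had chosen $w$ to maximize $|A|$. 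A common point for every triple genuinely does not imply a common point for $Q$ in an abstract set system (four $3$-subsets of a $4$-set cover all triples but have empty intersection), so the no-$C_6$ hypothesis must be invoked a second time, in a different configuration, and that is the missing content.

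The paper closes this with a \emph{nested} extremal argument: choose $w_1$ maximizing $|Q\cap N_B(w_1)|$ (giving $Q'$ and a missed vertex $v_1$), then among points adjacent to $v_1$ choose $w_2$ maximizing $|Q'\cap N_B(w_2)|$ (giving a missed $v_2\in Q'$), then take any $w_3$ adjacent to $v_1$ and $v_2$; the two maximality choices produce a $v_3\in Q'\cap N_B(w_2)$ missed by $w_3$, and $w_1v_2w_3v_1w_2v_3$ is an induced $C_6$. Note that the hexagon certifying the contradiction passes through the ``big'' point $w_1$ and two of its neighbours, not through the three pairwise-chosen points of a single triple as in your sketch. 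Your approach can be repaired along the same lines (choose $w'$ adjacent to $z$ maximizing $|N_B(w')\cap A|$, compare with the triple's common point to extract a vertex of $A$ missed by it, and build the hexagon through $w$), but as written the inductive step is asserted rather than proved. Incidentally, your worry about $C_4$'s in $B$ is a non-issue: the completed argument needs only $C_6$-freeness, and no $C_4$-freeness assumption is available or required.
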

\begin{proof}
Suppose to the contrary that there is some clique $Q$ in $B^2[V]$ such that, for any point $w\in W$, $Q\setminus N_B(w)\not=\emptyset$. Choose a point $w_1\in W$ where $Q':=Q\cap N_B(w_1)$ is maximal. Let $v_1\in Q\setminus N_B(w_1)$. Since $Q$ is a clique in $B^2[V]$, there is a point $w_2\in W$ adjacent to $v_1$ and some vertices in $Q'$. Choose such a point $w_2$ with $Q'\cap N_B(w_2)$ is maximal. 
By the choice of $w_1$, there is a vertex $v_2\in Q'\setminus N_B(w_2)$. Again, since $Q$ is a clique in $B^2[V]$, there is a point $w_3\in W$ adjacent to both $v_1$ and $v_2$. By the choice of $w_2$, there is a vertex $v_3\in Q'\cap N_B(w_2)$ non-adjacent to $w_3$. 
But then $w_1, v_2, w_3, v_1, w_2$ and $v_3$ induce a $C_6$ in $B$, a contradiction. Thus, there must be a point $w\in W$ such that $Q\subseteq N_B(w)$, and therefore by the maximality of $Q$, $Q=N_B(w)$.
\end{proof}

\medskip
In this section, we prove the following Theorem~\ref{thm:witness-girth2t}, characterizing map graphs having witnesses of large girth. Basically, Theorem~\ref{thm:witness-girth2t} is Theorem~\ref{thm:girth2t} with the additional planarity condition on the vertex-clique incidence bipartite graph. 
\begin{theorem}\label{thm:witness-girth2t}
Let $t\ge 4$ be an integer. The following statements are equivalent for every graph $G=(V,E_G)$.
\begin{itemize}
\item[\em (i)]
$G$ is a map graph having a witness of girth at least $2t$.
\item[\em (ii)] 
$G$ is diamond-free and $C_\ell$-free for every $4\le \ell\le t-1$, and the vertex-clique incidence bipartite graph $B_G$ of $G$ is planar. 
\item[\em (iii)]
The vertex-clique incidence bipartite graph $B_G$ of $G$ is planar and has girth at least $2t$.
\end{itemize}
\end{theorem}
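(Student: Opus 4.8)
\textbf{Proof proposal for Theorem~\ref{thm:witness-girth2t}.}

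The plan is to leverage Theorem~\ref{thm:girth2t} as much as possible, so that only the planarity bookkeeping is new. I would prove the chain (iii) $\Rightarrow$ (i) $\Rightarrow$ (ii) $\Rightarrow$ (iii). The implication (iii) $\Rightarrow$ (i) is immediate: if $B_G$ is planar and has girth at least $2t$, then $B_G$ is itself a witness of $G$ (since $G = B_G^2[V]$ always holds), and it is a planar bipartite graph, so $G$ is a map graph having a witness of girth at least $2t$. The implication (ii) $\Rightarrow$ (iii) is also short: assuming $G$ is diamond-free and $C_\ell$-free for $4\le\ell\le t-1$, Theorem~\ref{thm:girth2t} (the equivalence (ii) $\Leftrightarrow$ (iii) there) already gives that $B_G$ has girth at least $2t$; combined with the hypothesis that $B_G$ is planar, we get (iii).

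The real content is (i) $\Rightarrow$ (ii). So suppose $G = B^2[V]$ for some planar bipartite witness $B = (V,W,E_B)$ of girth at least $2t$. Since $t\ge 4$, $B$ has girth at least $8$, in particular no induced $C_4$ and no induced $C_6$. By Observation~\ref{obs:cycle}, exactly as in the proof of (i) $\Rightarrow$ (ii) of Theorem~\ref{thm:girth2t}, $G$ is diamond-free and $C_\ell$-free for every $4\le\ell\le t-1$. It remains to show $B_G$ is planar. Here is where Lemma~\ref{lem:star-clique} enters: because $B$ has no induced $C_6$, every maximal clique $Q$ of $G$ is of the form $Q = N_B(w)$ for some point $w\in W$. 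This gives a map $Q\mapsto w_Q$ from ${\cal C}(G)$ into $W$; it is injective because distinct maximal cliques have distinct vertex sets, hence distinct neighborhoods $N_B(w_Q)$. The plan is then to obtain a planar embedding of $B_G$ from a planar embedding of $B$ by restricting to the subgraph of $B$ induced by $V \cup \{w_Q : Q\in{\cal C}(G)\}$: in that induced subgraph, the neighborhood of $w_Q$ is exactly $Q$, so this subgraph is isomorphic to $B_G$ (the edge set is precisely $\{vw_Q : v\in Q\}$, matching the definition of $B_G$). Since any subgraph of a planar graph is planar, $B_G$ is planar. This completes (i) $\Rightarrow$ (ii).

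The main obstacle I anticipate is making the last identification airtight: one must check that the subgraph of $B$ on $V\cup\{w_Q\}$ really is $B_G$ and not something with extra or missing edges — in particular that no point $w_Q$ is adjacent in $B$ to a vertex outside $Q$ (true by $Q = N_B(w)$) and that every maximal clique is captured (true by Lemma~\ref{lem:star-clique}). A subtle point worth a sentence is that different maximal cliques might a priori be realized by the same point $w$ only if their neighborhoods coincide, which forces the cliques to be equal; so the assignment $Q\mapsto w_Q$ can be chosen injectively even if several points realize the same clique. Everything else is a direct appeal to Theorem~\ref{thm:girth2t} and Observation~\ref{obs:cycle}, so the proof should be quite short once this structural correspondence between $B$ and $B_G$ is set up cleanly.
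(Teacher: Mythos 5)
Your proposal is correct and follows essentially the same route as the paper: Theorem~\ref{thm:girth2t} handles the forbidden induced subgraphs, and Lemma~\ref{lem:star-clique} is used to realize $B_G$ as a subgraph of the planar witness $B$. The only (harmless) deviation is that the paper fixes a witness with a minimum number of points to make the point $w_Q$ unique, whereas you observe directly that any choice of $Q\mapsto w_Q$ is injective because $N_B(w_Q)=Q$ determines $Q$ --- a slight simplification that works just as well.
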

\begin{proof}

\noindent
(i) $\Rightarrow$ (ii): Let $B=(V,W,E_B)$ be a planar bipartite graph of girth at least $2t$ and with minimal number $|W|$ of points such that $G=B^2[V]$. 
First, it follows from Theorem~\ref{thm:girth2t}, part (i) $\Rightarrow$ (ii), that $G$ is diamond-free and $C_\ell$-free for every $4\le \ell\le t-1$. 
Next, by Lemma~\ref{lem:star-clique}, for every maximal clique $Q$ in $G$ there is a point $w_Q\in W$ of $B$ such that $Q=N_B(w_Q)$. Since $|W|$ is minimal, such a point is unique: If there were another point $w'\in W\setminus\{w_Q\}$ with $Q=N_B(w')$, then $B':= B-w'$ would be a planar bipartite graph of girth at least $2t$ and still satisfy $G=B'^2[V]$ with fewer number of points than $B$. Thus, 
\[\phi:B_G\to B, \text{ with $\phi(v)=v$, $v\in V$, and $\phi(Q)=w_Q$, $Q\in{\cal C}(G)$},\] 
is an injective function. Moreover, 
\[vQ\in E_{B_G} \Leftrightarrow v\in Q=N_B(w_Q) \Leftrightarrow \phi(v)\phi(Q)=vw_Q\in E_{B_G},\]
that is, $B_G$ is isomorphic to an induced subgraph of $B$. Hence $B_G$ is planar. 
(In fact, as $G=B_G^2[V]$, $B_G$ is indeed isomorphic to $B$.)     

\smallskip
\noindent
(ii) $\Rightarrow$ (iii): This implication follows immediately from Theorem~\ref{thm:girth2t}, part (ii) $\Rightarrow$ (iii).

\smallskip
\noindent
(iii) $\Rightarrow$ (i): This implication is obvious as $G=B_G^2[V]$.
\end{proof}

\medskip
Note that Theorem~\ref{thm:witness-girth2t} is not true in case $t=3$: The graph on the left side of Figure~\ref{fig:diamondfree}, as a planar graph, does admit a witness of girth six but its vertex-clique incidence bipartite graph is not planar; it contains $K_{3,3}$ as a minor.
 
Like Theorem~\ref{thm:girth2t}, one may observe the \lq convergence behavior\rq\ of Theorem~\ref{thm:witness-girth2t} depending on $t$; it \lq converges\rq\ to Theorem~\ref{thm:tree} as $t$ goes to infinity. We now derive efficient recognition algorithm from Theorem~\ref{thm:witness-girth2t} for map graphs having witnesses of large girth. We use the fact that map graphs with $n$ vertices have at most $27\cdot n$ many maximal cliques~\cite{ChenGP02}.

\paragraph{Map graphs having witnesses of girth at least eight} 
Let $G$ be a map graph with a witness of girth at least $8$. 
Since all maximal cliques in a graph can be listed in time $O(nm)$ per generated clique~\cite{TsukiyamaIAS77,ChibaN85}, we can list all maximal cliques in $G$ in time $27n\cdot O(nm)$ to obtain the vertex-clique incidence  bipartite graph $B_G$. Since $B_G$ has $O(n)$ vertices, checking planarity of $B_G$ can be done in $O(|V_{B_G}|+|E_{B_G}|)=O(n^2)$ time, and computing the girth of (planar) $B_G$ can be done in $O(n\log\log n)$ time~\cite{LackiS11}. 
Thus, we obtain
\begin{corollary}\label{cor:witness-girth8}
Map graphs having witnesses of girth at least $8$ can be recognized in time $O(n^2m)$. If so, such a witness can be constructed with the same time complexity.
\end{corollary}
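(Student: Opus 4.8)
The plan is to read the algorithm directly off Theorem~\ref{thm:witness-girth2t} applied with $t=4$, together with the bound of Chen, Grigni and Papadimitriou~\cite{ChenGP02} that an $n$-vertex map graph has at most $27n$ maximal cliques. By the equivalence (i)~$\Leftrightarrow$~(iii) of Theorem~\ref{thm:witness-girth2t}, a graph $G=(V,E_G)$ is a map graph with a witness of girth at least $8$ precisely when $B_G$ is planar and has girth at least $8$, and in that case $B_G$ itself is such a witness (recall $G=B_G^2[V]$ always). So it suffices to build $B_G$, test its planarity, and, if planar, compute its girth.

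First I would enumerate the maximal cliques of $G$ with the algorithm of~\cite{TsukiyamaIAS77,ChibaN85}, which outputs them one at a time using $O(nm)$ time per clique, but halt the enumeration the moment more than $27n$ maximal cliques have been produced. If it halts early, then $G$ has more than $27n$ maximal cliques, so by~\cite{ChenGP02} $G$ is not a map graph, and in particular has no witness of girth at least~$8$; the algorithm rejects. Otherwise the enumeration terminates after $27n\cdot O(nm)=O(n^2m)$ time with the full list ${\cal C}(G)$, $|{\cal C}(G)|\le 27n$, and $B_G$ is assembled from it.

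Next I would bound the size of $B_G$: it has $|V|+|{\cal C}(G)|=O(n)$ vertices and $\sum_{Q\in{\cal C}(G)}|Q|=O(n^2)$ edges, so a linear-time planarity test decides planarity of $B_G$ in $O(n^2)$ time. If $B_G$ is not planar, reject. If $B_G$ is planar, compute its girth in $O(n\log\log n)$ time by the planar-girth algorithm of~\cite{LackiS11}, and accept iff the girth is at least $8$, returning $B_G$ as the witness in that case. Correctness in both directions is Theorem~\ref{thm:witness-girth2t}: $B_G$ planar of girth $\ge 8$ yields the witness $B_G$, while if $B_G$ is non-planar or has a cycle of length $4$ or $6$ then $G$ has no witness of girth $\ge 8$ (in the latter case $G$ even contains an induced diamond). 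The three phases sum to $O(n^2m)+O(n^2)+O(n\log\log n)=O(n^2m)$, proving both the recognition and the construction claim.

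The only point that needs care --- what I would flag as the pitfall rather than a genuine difficulty --- is that one must not run the clique enumeration unconditionally on an arbitrary input: a graph that is far from diamond-free can have exponentially many maximal cliques, so the enumeration could fail to terminate in polynomial time. The early abort at $27n$ cliques is exactly what rescues this, since overshooting that threshold is itself a certificate that $G$ is not a map graph. Everything else is bookkeeping on top of Theorem~\ref{thm:witness-girth2t} and the $27n$ bound.
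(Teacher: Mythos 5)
Your proof is correct and follows essentially the same route as the paper: enumerate the maximal cliques with the $O(nm)$-delay algorithm using the $27n$ bound of~\cite{ChenGP02}, build $B_G$, test planarity in linear time, compute the girth of the planar graph via~\cite{LackiS11}, and appeal to Theorem~\ref{thm:witness-girth2t}. Your explicit early abort once more than $27n$ maximal cliques appear is a detail the paper leaves implicit (it tacitly assumes the input is already a map graph), and it is exactly the right way to make the enumeration phase provably polynomial on arbitrary inputs.
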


\paragraph{Map graphs having witnesses of girth at least $2t$} 
Let $t\ge 5$ be an integer. Map graphs having witnesses of girth at least $2t$ are particularly diamond-free and $C_4$-free. Thus, we first recognize if the given graph $G$ is diamond-free and $C_4$-free in time $O(m^{\frac{2}{3}}n)$~\cite{EschenHSS11}. If so, we list all maximal cliques in $G$, there are at most $27\cdot n$, in the same time complexity~\cite{EschenHSS11}, to construct the vertex-clique incidence bipartite graph $B_G$ of $G$. Since $B_G$ has at most $28\cdot n$ vertices, checking planarity of $B_G$ can be done in $O(|V_{B_G}|+|E_{B_G}|)=O(n^2)$ time, and computing the girth of (planar) $B_G$ can be done in $O(n\log\log n)$ time~\cite{LackiS11}. Thus we obtain
\begin{corollary}\label{cor:witness-girth2t}
Let $t\ge 5$ be an integer. Map graphs having witnesses of girth at least $2t$ can be recognized in time $O(\max\{m^{\frac{2}{3}}n, n^2\})$. If so, such a witness can be constructed with the same time complexity.
\end{corollary}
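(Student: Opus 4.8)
The plan is to read the algorithm straight off the structural characterization in Theorem~\ref{thm:witness-girth2t}: for $t\ge 5$, the graph $G$ is a map graph with a witness of girth at least $2t$ if and only if $B_G$ is planar and has girth at least $2t$, and in that case $B_G$ itself is the desired witness, since $G=B_G^2[V]$ always holds. So the entire task reduces to deciding the two conditions ``$B_G$ planar'' and ``$B_G$ has girth at least $2t$'' within the claimed time bound, with the one subtlety that we must abort early on negative instances so that the cost of building and testing $B_G$ is not incurred when it is large.

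The steps, in order, would be as follows. First, exploit that by part (ii) of Theorem~\ref{thm:witness-girth2t} any such $G$ is in particular diamond-free and $C_4$-free, and test this property on the input in $O(m^{\frac{2}{3}}n)$ time using the algorithm of Eschen et al.~\cite{EschenHSS11}; reject if it fails. Second, still using the machinery of~\cite{EschenHSS11}, enumerate the maximal cliques of $G$ within $O(m^{\frac{2}{3}}n)$ time; since a map graph on $n$ vertices has at most $27n$ maximal cliques~\cite{ChenGP02}, halt and reject as soon as more than $27n$ cliques have been produced, so that otherwise we obtain $B_G$ with $|V(B_G)|\le 28n$. Third, run a linear-time planarity test on $B_G$, at cost $O(|V(B_G)|+|E(B_G)|)=O(n^2)$, and reject if $B_G$ is nonplanar. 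Fourth, if $B_G$ is planar, compute its girth in $O(n\log\log n)$ time via the planar-girth algorithm of \L\k{a}cki and Sankowski~\cite{LackiS11}, and accept if and only if this girth is at least $2t$. Correctness is immediate from Theorem~\ref{thm:witness-girth2t}; summing the four steps yields the bound $O(\max\{m^{\frac{2}{3}}n,\,n^2\})$; and when we accept, the witness to output is exactly the $B_G$ already constructed, at no extra cost.

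I do not expect a genuine obstacle here — the statement is a routine composition of the structure theorem with off-the-shelf subroutines. The only place that needs real care is the running-time bookkeeping on negative instances: the $(\text{diamond},C_4)$-free test must be performed first so that a dense or otherwise bad input cannot force us into expensive clique enumeration, and the clique enumeration must be stopped once the count (equivalently, the size of $B_G$) exceeds the map-graph bound of~\cite{ChenGP02}, so that building $B_G$ and testing its planarity stay within the stated bound rather than blowing up.
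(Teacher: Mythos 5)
Your proposal is correct and follows essentially the same route as the paper: test $(\text{diamond},C_4)$-freeness via Eschen et al., enumerate the maximal cliques (bounded by $27n$ for map graphs) to build $B_G$, then check planarity in linear time and compute the girth of the planar $B_G$ via \L{}acki--Sankowski. Your extra remark about aborting the clique enumeration once the $27n$ bound is exceeded is a sensible explicit safeguard that the paper leaves implicit, but it does not change the argument.
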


\section{Conclusion} 

In this note we consider half-squares of girth constrained bipartite graphs. Given an integer $g\ge 8$, we characterize and efficiently recognize half-squares of bipartite graphs with girth at least $g$. We show that map graphs having witnesses with girth at least $g$ are exactly the graphs for which the vertex-clique incidence bipartite graph is planar and of girth at least $g$. Hence map graphs having witnesses of girth at least $g$ can be recognized in $O(n^2 m)$ time. The recognition and characterization problems of map graphs having witnesses of girth at least six remain unsolved. 

\begin{itemize}
\item Which map graphs admit witnesses of girth at least $6$? Besides planar graphs, as pointed out by a referee, the so-called triangulated NIC-planar graphs, investigated in~\cite{BachmaierBHNR17}, are examples of map graphs admitting witnesses of girth six.

\item Is there an efficient combinatorial polynomial-time recognition algorithm for map graphs having witnesses of girth at least $6$?
\end{itemize}

Recall that the class of map graphs having witnesses of girth at least $6$ strictly lies between planar graphs and map graphs, and thus it is an interesting graph class.

\end{document}